\documentclass[journal]{IEEEtran}
\ifCLASSINFOpdf
\else
\fi
%
%

%
\usepackage[cmex10]{amsmath}
%

\usepackage{amsfonts}
\usepackage{amssymb}

\hyphenation{op-tical net-works semi-conduc-tor}

\newtheorem{theorem}{\bf Theorem}

\newtheorem{corollary}[theorem]{\bf Corollary}

\newtheorem{definition}[theorem]{\bf Definition}

\newtheorem{proposition}[theorem]{\bf Proposition}

\newenvironment{proof}[1][Proof]{\noindent\textbf{#1.} }{\ \rule{0.5em}{0.5em}}

\begin{document}
%
\title{$\alpha$-divergence derived as\\ the generalized rate function in a power-law system}

%
%
%

\author{Hiroki~Suyari and Antonio Maria~Scarfone%
\thanks{H. Suyari is with Graduate School of Advanced Integration Science,
Chiba University, 1-33 Yayoi, Inage, Chiba, Chiba 263-8522, Japan
 e-mail: suyari@faculty.chiba-u.jp}
\thanks{A.M. Scarfone is with Istituto dei Sistemi Complessi (ISC-CNR) 
c/o Dipartimento di Scienza Applicatae Tecnologia,
Politecnico di Torino, Corso Duca degli Abruzzi 24,
10129 Torino, Italy
 e-mail: antonio.scarfone@polito.it}}


%


\maketitle

\begin{abstract}
The generalized binomial distribution in Tsallis statistics (power-law system)
is explicitly formulated from the precise $q$-Stirling's formula. The $\alpha
$-divergence (or $q$-divergence) is uniquely derived from the generalized
binomial distribution in the sense that when $\alpha\rightarrow-1$ (i.e.,
$q\rightarrow1$) it recovers KL divergence obtained from the standard binomial
distribution. Based on these combinatorial considerations, it is shown that
$\alpha$-divergence (or $q$-divergence) is appeared as the generalized rate
function in the large deviation estimate in Tsallis statistics.
\end{abstract}


%
\IEEEpeerreviewmaketitle

\section{Introduction}

The large deviation principle (LDP for short) has mathematically presented and
quantified the asymptotic behavior of the probabilities of rare events in many
stochastic phenomena. It has brought about deep significant insights for
understanding of each phenomena \cite{Varadhan66}\cite{Varadhan84}\cite{DZ98}.
The LDP covers quite broad areas ranging from the fundamentals in probability
theory and statistics to its applications such as statistical physics
\cite{Ellis85}\cite{Touchette09}, risk management \cite{AK96}, information
theory \cite{CoverThomas06} and so on. In most of theoretical results in LDP,
the assumption of \textquotedblleft i.i.d. (independent and identically
distributed)\textquotedblright\ for random variables is used. This assumption
leads to the discussion on the exponential decay of rare events in stochastic
phenomena with great help of many well-established theoretical results based
on \textquotedblleft i.i.d.\textquotedblright\ assumption. This strong
\textquotedblleft i.i.d.\textquotedblright\ assumption\ has been often tried
to be weakened in many studies. One of the reasons is that actual observations
generally do not satisfy i.i.d. assumptions. A typical and well-known example
is power-law behavior often observed in strongly correlated systems. In these
cases we take Tsallis statistics as one of such power-law systems because its
mathematical foundations has been widely explored \cite{Suyari06}.

Along similar studies on LDP related with Tsallis statistics, there are a few
papers such as \cite{RuizTsallis2012} and \cite{NaudtsSuyari2014}.
The paper \cite{RuizTsallis2012} discusses the possibility of LDP for the strongly correlated
random variables in Tsallis statistics.
They consider the correlated coin tossing model
based on the $q$-Gaussian distribution and numerically evaluate the
possibility of a $q$-generalization of LDP for a given $q$-divergence. On the
other hand, our present paper does not require the $q$-Gaussian distribution
and the $q$-divergence in advance for the large deviation estimate. Our
approach is completely analytical starting from the fundamental nonlinear
differential equation $dy/dx=y^{q}$ only, and the $q$-divergence is naturally
derived from the formulation of the generalized binomial distribution
($q$-binomial distribution), which results in an appearance of the
$q$-divergence as the rate function in our main result (i.e., Theorem 14).
Thus our approach and its results are quite different from
\cite{RuizTsallis2012}.
The paper
\cite{NaudtsSuyari2014} discusses the application of the $q$-exponential
functions to the standard LDP based on the i.i.d. assumptions, which obviously
differs from our present work.

In this paper, we apply our combinatorial formulations in Tsallis statistics
to the large deviation estimate in it. This paper consists of the five
sections including this introduction. In the section two we briefly review the
derivation of Tsallis entropy as the unique entropy corresponding to
$dy/dx=y^{q}$. In the course of the derivation of Tsallis entropy, the rough
$q$-Stirling's formula is derived and applied. In the formulation of the
$q$-Stirling's formula, the precise $q$-Stirling's formula has been already
given in \cite{Suyari06}, which is useful for the formulation of the
$q$-binomial distribution in Tsallis statistics. The derivation of the
$q$-binomial distribution is given in the section three. The $q$-binomial
distribution has the nice correspondence with the $q$-divergence (or $\alpha
$-divergence), which is given in the following section four. The one-to-one
correspondence between the $q$-binomial distribution and the $q$-divergence
(or $\alpha$-divergence) is applied to the large deviation estimate in a
combinatorial approach. The final section is devoted to the conclusion.

\section{Tsallis entropy and $q$-Stirling's formula uniquely determined from
$dy/dx=y^{q}$}

Our approach starts from the fundamental nonlinear differential equation
$dy/dx=y^{q}$ only, instead of Tsallis entropy. This approach provides us with
most of the theoretical results in Tsallis statistics shown in \cite{Suyari06}%
\cite{SuyariWada08}. (See \cite{Ts88}\cite{Tsallis09}\cite{Naudts11} for the
review of the studies in Tsallis statistics and the maximum entropy principle approach.)

The solution to $dy/dx=y^{q}$ yields the $q$-exponential $\exp_{q}\left(
x\right)  $ as the inverse function of the $q$-logarithm $\ln_{q}x$, which are
respectively defined as follows:

\begin{definition}
($q$-logarithm, $q$-exponential) The $q$-logarithm $\ln_{q}x:\mathbb{R}%
^{+}\rightarrow\mathbb{R}$ and the $q$-exponential $\exp_{q}\left(  x\right)
:\mathbb{R}\rightarrow\mathbb{R}$ for $x\in\mathbb{R}$ satisfying $1+\left(
1-q\right)  x>0$ are respectively defined by%
\begin{align}
\ln_{q}x &  :=\frac{x^{1-q}-1}{1-q},\label{q-logarithm}\\
\exp_{q}\left(  x\right)   &  :=\left[  1+\left(  1-q\right)  x\right]
^{\frac{1}{1-q}}.\label{q-exponential}%
\end{align}

\end{definition}

Then a new product $\otimes_{q}$ to satisfy the following identities as the
$q$-exponential law is introduced.%
\begin{align}
\ln_{q}\left(  x\otimes_{q}y\right)   &  =\ln_{q}x+\ln_{q}%
y,\label{requirement1}\\
\exp_{q}\left(  x\right)  \otimes_{q}\exp_{q}\left(  y\right)   &  =\exp
_{q}\left(  x+y\right)  . \label{requirement2}%
\end{align}
For this purpose, the new multiplication operation $\otimes_{q}$ is introduced
in \cite{NMW03}\cite{Bo03} (See also \cite{Scarfone13}). The concrete forms of
the $q$-logarithm and $q$-exponential are given in (\ref{q-logarithm}) and
(\ref{q-exponential}), so that the above requirement (\ref{requirement1}) or
(\ref{requirement2}) as the $q$-exponential law leads to the definition of
$\otimes_{q}$ between two positive numbers.

\begin{definition}
($q$-product) For $x,y\in\mathbb{R}^{+}$ satisfying $\,x^{1-q}+y^{1-q}-1>0$,
the $q$-product $\otimes_{q}$ is defined by%
\begin{equation}
x\otimes_{q}y:=\left[  x^{1-q}+y^{1-q}-1\right]  ^{\frac{1}{1-q}}.
\label{def of q-product}%
\end{equation}

\end{definition}

The $q$-product recovers the usual product such that $\underset{q\rightarrow
1}{\lim}\left(  x\otimes_{q}y\right)  =xy$.

By means of the $q$-product (\ref{def of q-product}), the $q$-factorial is
naturally defined in the following form \cite{Suyari06}.

\begin{definition}
($q$-factorial) For a natural number $n\in\mathbb{N}$ and $q\in\mathbb{R}^{+}%
$, the $q$-factorial $n!_{q}$ is defined by%
\begin{equation}
n!_{q}:=1\otimes_{q}\cdots\otimes_{q}n.\label{def of q-kaijyo}%
\end{equation}

\end{definition}

Thus, we concretely compute $q$-Stirling's formula.

\begin{theorem}
(rough $q$-Stirling's formula) Let $n!_{q}$ be the $q$-factorial defined by
(\ref{def of q-kaijyo}). The rough $q$-Stirling's formula $\ln_{q}\left(
n!_{q}\right)  $ is computed as follows:%
\begin{equation}
\ln_{q}\left(  n!_{q}\right)  =\left\{
\begin{array}
[c]{ll}%
\dfrac{n\ln_{q}n-n}{2-q}+O\left(  \ln_{q}n\right)  \quad & \text{if}\quad
q\neq2,\\
n-\ln n+O\left(  1\right)  & \text{if}\quad q=2.
\end{array}
\right.  \label{rough q-Stirling}%
\end{equation}

\end{theorem}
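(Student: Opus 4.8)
The plan is to collapse the $q$-factorial into an ordinary sum using the defining property (\ref{requirement1}) of the $q$-logarithm. Since $\ln_q(x\otimes_q y)=\ln_q x+\ln_q y$, applying this repeatedly to the definition (\ref{def of q-kaijyo}) and then inserting (\ref{q-logarithm}) gives
\begin{equation}
\ln_q\left(n!_{q}\right)=\sum_{k=1}^n\ln_q k=\frac{1}{1-q}\left(\sum_{k=1}^n k^{1-q}-n\right).
\end{equation}
After this single step the $q$-product machinery disappears and the problem becomes a purely classical estimate of the power sum $S_n:=\sum_{k=1}^n k^{1-q}$.

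Next I would extract the claimed leading term by an exact algebraic rearrangement. A direct computation from (\ref{q-logarithm}) yields $\frac{n\ln_q n-n}{2-q}=\frac{n^{2-q}}{(1-q)(2-q)}-\frac{n}{1-q}$, so subtracting it from the previous display produces the clean identity
\begin{equation}
\ln_q\left(n!_{q}\right)-\frac{n\ln_q n-n}{2-q}=\frac{1}{1-q}\left(S_n-\frac{n^{2-q}}{2-q}\right).
\end{equation}
It therefore suffices to prove that the right-hand side is $O\left(\ln_q n\right)$. The natural tool is the comparison of $S_n$ with $\int x^{1-q}\,dx$: when $q<2$ one has $\frac{n^{2-q}}{2-q}=\int_0^n x^{1-q}\,dx$ exactly, and since $x\mapsto x^{1-q}$ is monotone, the sum-versus-integral discrepancy is governed by the boundary behaviour (an Euler--Maclaurin or telescoping bound).

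The main obstacle is that the magnitude of $S_n$, and hence the meaning of the error $O\left(\ln_q n\right)$, changes qualitatively with $q$, so I would argue regime by regime. For $q<1$ the summand grows, $S_n$ diverges like $n^{2-q}/(2-q)$, and the monotone comparison bounds the remainder by the boundary term $n^{1-q}$; as $\ln_q n\sim n^{1-q}/(1-q)\to\infty$ this error is precisely of order $\ln_q n$. For $1<q<2$ the series still diverges, but the subtracted integral $\int_0^n x^{1-q}\,dx$ cancels the divergence and leaves a bounded remainder of generalized Euler--Mascheroni type; since here $\ln_q n\to(q-1)^{-1}$, a positive constant, a bounded error is automatically $O\left(\ln_q n\right)$. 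For $q>2$ the series $\sum_k k^{1-q}$ converges to $\zeta(q-1)$ while $n^{2-q}/(2-q)\to0$, so the remainder is again bounded and $\ln_q n$ again tends to a positive constant. In every case $q\neq2$ the difference is thus $O\left(\ln_q n\right)$, which is the first line of (\ref{rough q-Stirling}).

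Finally, the exceptional value $q=2$ must be treated directly, because the factor $2-q$ in the denominator degenerates. Setting $q=2$ in (\ref{q-logarithm}) gives $\ln_2 k=1-1/k$, so the sum representation above becomes $\ln_2\left(n!_{q}\right)=\sum_{k=1}^n\left(1-1/k\right)=n-H_n$, where $H_n$ is the $n$-th harmonic number. The classical estimate $H_n=\ln n+O(1)$ then delivers $\ln_2\left(n!_{q}\right)=n-\ln n+O(1)$, completing the second line of (\ref{rough q-Stirling}).
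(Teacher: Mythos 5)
Your proposal is correct, and it is essentially the standard argument for this theorem: note that the paper itself contains no proof here --- it only points to \cite{Suyari06} --- and the proof given in that reference follows the same route as yours, collapsing $\ln_q\left(n!_q\right)$ into $\frac{1}{1-q}\left(\sum_{k=1}^{n}k^{1-q}-n\right)$ via the additivity of $\ln_q$ under $\otimes_q$ and then estimating the power sum against $\int x^{1-q}\,dx$. Your exact rearrangement isolating $\frac{n\ln_q n-n}{2-q}$ and the regime analysis are sound: for $0<q<1$ the monotone comparison leaves a boundary term $O\left(n^{1-q}\right)=O\left(\ln_q n\right)$ (both diverge); for $1<q<2$ and $q>2$ the remainder is bounded while $\ln_q n\to 1/(q-1)>0$, so bounded errors are indeed $O\left(\ln_q n\right)$; and the harmonic-number computation at $q=2$ is exactly right. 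One pinhole to patch: your master identity carries the factor $1/(1-q)$, so it says nothing at $q=1$, which the theorem's first line also covers; there the claim is just classical Stirling, $\ln n!=n\ln n-n+O\left(\ln n\right)$, obtained by the same sum-versus-integral comparison applied to $\sum_{k=1}^{n}\ln k$, so you should add that one line rather than assert that "every case $q\neq 2$" has been treated. A second point, shared equally by the paper and its source and hence not a defect of your write-up relative to them: for $q>1$ the nested $q$-products defining $n!_q$ eventually leave the domain of $\otimes_q$ (the bracket $\sum_{k=1}^{n}k^{1-q}-(n-1)$ turns negative for large $n$), so the identity $\ln_q\left(n!_q\right)=\sum_{k=1}^{n}\ln_q k$ must be read as a formal convention, in the spirit of (\ref{lnq-q-multinomial}).
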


See \cite{Suyari06} for the proof.

Similarly to the $q$-product, $q$-ratio is introduced from the requirements:
\begin{align}
\ln_{q}\left(  x\oslash_{q}y\right)   &  =\ln_{q}x-\ln_{q}y,\\
\exp_{q}\left(  x\right)  \oslash_{q}\exp_{q}\left(  y\right)   &  =\exp
_{q}\left(  x-y\right)  .
\end{align}
Then we define the $q$-ratio as follows.

\begin{definition}
($q$-ratio) For $x,y\in\mathbb{R}^{+}$ satisfying $x^{1-q}-y^{1-q}+1>0$, the
inverse operation to the $q$-product is defined by
\begin{equation}
x\oslash_{q}y:=\left[  x^{1-q}-y^{1-q}+1\right]  ^{\frac{1}{1-q}}%
\end{equation}
which is called $q$-ratio in \cite{Bo03}.
\end{definition}

The $q$-product, $q$-factorial and $q$-ratio are applied to the definition of
the $q$-multinomial coefficient \cite{Suyari06}.

\begin{definition}
($q$-multinomial coefficient) For $n=\sum_{i=1}^{k}n_{i}$ and $n_{i}%
\in\mathbb{N}\,\left(  i=1,\cdots,k\right)  ,$ the $q$-multinomial coefficient
is defined by%
\begin{equation}
\left[
\begin{array}
[c]{ccc}
& n & \\
n_{1} & \cdots & n_{k}%
\end{array}
\right]  _{q}:=\left(  n!_{q}\right)  \oslash_{q}\left[  \left(  n_{1}%
!_{q}\right)  \otimes_{q}\cdots\otimes_{q}\left(  n_{k}!_{q}\right)  \right]
.\label{def of q-multinomial coefficient}%
\end{equation}

\end{definition}

From the definition (\ref{def of q-multinomial coefficient}), it is clear that%
\begin{equation}
\underset{q\rightarrow1}{\lim}\left[
\begin{array}
[c]{ccc}
& n & \\
n_{1} & \cdots & n_{k}%
\end{array}
\right]  _{q}=\left[
\begin{array}
[c]{ccc}
& n & \\
n_{1} & \cdots & n_{k}%
\end{array}
\right]  =\frac{n!}{n_{1}!\cdots n_{k}!}.
\end{equation}
Throughout this paper, we consider the $q$-logarithm of the $q$-multinomial
coefficient to be given by
\begin{equation}
\ln_{q}\left[
\begin{array}
[c]{ccc}
& n & \\
n_{1} & \cdots & n_{k}%
\end{array}
\right]  _{q}=\ln_{q}\left(  n!_{q}\right)  -\ln_{q}\left(  n_{1}!_{q}\right)
\cdots-\ln_{q}\left(  n_{k}!_{q}\right)  . \label{lnq-q-multinomial}%
\end{equation}

Based on these fundamental formulas, we obtain the one-to-one correspondence
(\ref{2-q-correspondence}) between the $q$-multinomial coefficient and Tsallis
entropy as follows \cite{Suyari06}.

\begin{theorem}
When $n\in\mathbb{N}$ is sufficiently large, the $q$-logarithm of the
$q$-multinomial coefficient coincides with Tsallis entropy in the following
correspondence:%
\begin{align}
&  \ln_{q}\left[
\begin{array}
[c]{ccc}
& n & \\
n_{1} & \cdots & n_{k}%
\end{array}
\right]  _{q}\nonumber\\
&  \simeq\left\{
\begin{array}
[c]{ll}%
\dfrac{n^{2-q}}{2-q}\cdot S_{2-q}\left(  \dfrac{n_{1}}{n},\cdots,\dfrac{n_{k}%
}{n}\right)   & \text{if}\quad q>0,\,\,q\neq2\\
-S_{1}\left(  n\right)  +\sum\limits_{i=1}^{k}S_{1}\left(  n_{i}\right)   &
\text{if}\quad q=2
\end{array}
\right.  \label{2-q-correspondence}%
\end{align}
where $S_{q}$ is Tsallis entropy defined by%
\begin{equation}
S_{q}\left(  {p_{1},\ldots,p_{k}}\right)  :={\frac{{{1-\sum\limits_{i=1}%
^{k}{p_{i}^{q}}}}}{{q-1}}}\label{Tsallis entropy}%
\end{equation}
and $S_{1}\left(  n\right)  :=\ln n.$
\end{theorem}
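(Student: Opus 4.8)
The plan is to substitute the rough $q$-Stirling's formula (\ref{rough q-Stirling}) into the additive expression (\ref{lnq-q-multinomial}) and then use the constraint $n=\sum_{i=1}^{k}n_{i}$ twice to collapse the result onto Tsallis entropy. Both branches of (\ref{2-q-correspondence}) arise from the same substitute-and-cancel mechanism; only the leading form of $\ln_{q}(n!_{q})$ differs between $q\neq2$ and $q=2$.

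For $q>0$ with $q\neq2$, inserting the first line of (\ref{rough q-Stirling}) into (\ref{lnq-q-multinomial}) and setting aside the $O(\ln_{q}n)$ corrections gives, to leading order,
\begin{align*}
\ln_{q}\left[\cdots\right]_{q}
&\simeq\frac{n\ln_{q}n-n}{2-q}-\sum_{i=1}^{k}\frac{n_{i}\ln_{q}n_{i}-n_{i}}{2-q}\\
&=\frac{1}{2-q}\left[n\ln_{q}n-\sum_{i=1}^{k}n_{i}\ln_{q}n_{i}\right],
\end{align*}
where the first use of $n=\sum_{i}n_{i}$ kills the linear terms $-n+\sum_{i}n_{i}$. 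I would then expand each product through $m\ln_{q}m=(m^{2-q}-m)/(1-q)$, coming from (\ref{q-logarithm}), and invoke $n=\sum_{i}n_{i}$ a second time to cancel the newly produced linear parts, reaching $\frac{1}{(2-q)(1-q)}\left[n^{2-q}-\sum_{i}n_{i}^{2-q}\right]$. Factoring out $n^{2-q}$ and comparing with the definition (\ref{Tsallis entropy}) of $S_{2-q}$ evaluated at $(n_{1}/n,\ldots,n_{k}/n)$ recovers precisely the prefactor $n^{2-q}/(2-q)$ multiplying $S_{2-q}$, which is the first branch. For $q=2$ the second line of (\ref{rough q-Stirling}) gives $\ln_{2}(n!_{2})\simeq n-\ln n$, and the same single cancellation via $n=\sum_{i}n_{i}$ leaves $-\ln n+\sum_{i}\ln n_{i}=-S_{1}(n)+\sum_{i}S_{1}(n_{i})$.

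The step I expect to be most delicate is the justification of the symbol $\simeq$, because the cancellation of the linear-in-$n$ contributions is not innocuous. Once the dominant pieces of the individual $\ln_{q}(n_{i}!_{q})$ annihilate one another, the surviving quantity is only of order $n^{2-q}$, and one must confirm that it still dominates the accumulated Stirling errors --- $O(\ln_{q}n)$ from each of the $k+1$ factorials when $q\neq2$, and $O(1)$ when $q=2$. Since $k$ is held fixed while $n\to\infty$, summing a bounded number of error terms does not alter their order, so the estimate reduces to the single-factorial comparison underlying (\ref{rough q-Stirling}); nevertheless the relevant orders of the retained term and of $\ln_{q}n$ behave differently on $0<q<1$, on $1<q<2$, and for $q>2$, so I would verify the dominance regime by regime to make the \textquotedblleft sufficiently large $n$\textquotedblright\ hypothesis precise.
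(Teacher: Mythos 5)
Your proposal is correct and is essentially the paper's own route: the paper defers the proof to \cite{Suyari06}, but that argument is precisely your substitution of the rough $q$-Stirling formula (\ref{rough q-Stirling}) into the additive expression (\ref{lnq-q-multinomial}), with the two cancellations via $n=\sum_{i}n_{i}$ giving $\frac{n^{2-q}-\sum_{i}n_{i}^{2-q}}{(2-q)(1-q)}=\frac{n^{2-q}}{2-q}S_{2-q}\left(\frac{n_{1}}{n},\ldots,\frac{n_{k}}{n}\right)$ for $q\neq 2$ and the $q=2$ branch analogously. Your closing caution about whether the retained term dominates the discarded $O(\ln_{q}n)$ corrections in every regime (notably $q>2$, where $n^{2-q}\rightarrow 0$ while the errors stay bounded) is a fair remark about the informal sense in which the paper uses $\simeq$, not a gap in your derivation relative to the paper's.
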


See \cite{Suyari06} for the proof. In this way, Tsallis entropy is determined
as the unique entropy corresponding to the fundamental nonlinear differential
equation $dy/dx=y^{q}$.

Clearly, the additive duality \textquotedblleft$q\leftrightarrow2-q$%
\textquotedblright\ appears in the above one-to-one correspondence
(\ref{2-q-correspondence}). Other dualities such as the multiplicative duality
\textquotedblleft$q\leftrightarrow1/q$\textquotedblright, $q$-triplet and
multifractal triplet appears as special cases of the more generalized
correspondence \cite{SuyariWada08}. Apart from these derivations, a typical
and fundamental application of the $q$-product is the derivation of the
$q$-Gaussian distribution through the maximum likelihood principle
\cite{Suyari04-LawofError}.

\section{The generalized binomial distribution derived from the precise
$q$-Stirling's formula}

In the previous section, the rough $q$-Stirling's formula
(\ref{rough q-Stirling}) is applied to the derivation of Tsallis entropy as
the unique entropy corresponding to $dy/dx=y^{q}$. In order to define the
generalized binomial distribution, the precise $q$-Stirling's formula is required.

\begin{theorem}
(precise $q$-Stirling's formula) Let $n!_{q}$ be the $q$-factorial defined by
(\ref{def of q-kaijyo}). The precise $q$-Stirling's formula $\ln_{q}n!_{q}$ is
computed by:%
\begin{equation}
\ln_{q}n!_{q}\simeq\left\{
\begin{array}
[c]{ll}%
n-\frac{1}{2n}-\ln n-\frac{1}{2}-\delta_{2} & \left(  q=2\right) \\
\left(  \frac{n}{2-q}+\frac{1}{2}\right)  \ln_{q}n-\frac{n}{2-q}+c_{q} &
\left(  q\neq2\right)
\end{array}
\right.  \label{precise q-Stirling_cq}%
\end{equation}
where $c_{q}:=\frac{1}{2-q}-\delta_{q}$, $\delta_{q}$ is a function of $q$
only, and $\delta_{1}=1-\ln\sqrt{2\pi}$.
\end{theorem}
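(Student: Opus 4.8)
The plan is to exploit the $q$-exponential law (\ref{requirement1}), which linearizes the $q$-factorial. Since $n!_q = 1 \otimes_q \cdots \otimes_q n$ by (\ref{def of q-kaijyo}), repeated use of $\ln_q(x\otimes_q y)=\ln_q x+\ln_q y$ gives $\ln_q n!_q=\sum_{k=1}^n\ln_q k$, and inserting the definition (\ref{q-logarithm}) reduces the whole problem, for $q\neq 1$, to a classical power-sum estimate,
\[
\ln_q n!_q=\frac{1}{1-q}\left(\sum_{k=1}^n k^{1-q}-n\right).
\]
Everything now hinges on the asymptotics of $\sum_{k=1}^n k^{1-q}$.

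For $q\neq 2$ I would apply the Euler--Maclaurin summation formula to $f(x)=x^{1-q}$. The integral supplies the leading term $\int_1^n x^{1-q}\,dx=(n^{2-q}-1)/(2-q)$, the trapezoidal correction supplies $\tfrac12(1+n^{1-q})$, and the remaining Bernoulli terms together with the two endpoints collapse (by the standard analytic continuation of the power sum) into the single $n$-independent constant $\zeta(q-1)$, so that $\sum_{k=1}^n k^{1-q}=n^{2-q}/(2-q)+\tfrac12 n^{1-q}+\zeta(q-1)+o(1)$. The decisive algebraic step is to re-express the powers of $n$ through $\ln_q n$ using $n^{1-q}=1+(1-q)\ln_q n$ and $n^{2-q}=n+n(1-q)\ln_q n$, both immediate from (\ref{q-logarithm}). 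Dividing by $1-q$ and collecting terms, the $n\ln_q n$ and $\ln_q n$ pieces combine into $\left(\tfrac{n}{2-q}+\tfrac12\right)\ln_q n$, the bare $n$ pieces combine (the factor $1-q$ cancelling the $q-1$) into $-n/(2-q)$, and what remains is exactly a constant $c_q=\tfrac{1}{2(1-q)}+\zeta(q-1)/(1-q)$, manifestly a function of $q$ alone.

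The value $q=2$ is genuinely exceptional because it is precisely where $\zeta(q-1)$ has its pole at argument $1$, so the power sum degenerates to the harmonic series. Here I would compute directly: $\ln_2 k=1-1/k$, whence $\ln_2 n!_2=\sum_{k=1}^n(1-1/k)=n-H_n$, and the harmonic expansion $H_n=\ln n+\gamma+\tfrac{1}{2n}+O(n^{-2})$ yields $n-\ln n-\tfrac{1}{2n}-\gamma$, matching the stated form with $\delta_2=\gamma-\tfrac12$.

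I expect the main obstacle to be the rigorous justification that the Euler--Maclaurin correction genuinely converges to the single $q$-dependent constant $\zeta(q-1)$ and that the tail is $o(1)$: for $q>1$ the derivatives $f^{(2j-1)}(n)$ decay and $\zeta(q-1)$ is the literal (resp.\ regularized) value of $\sum_k k^{1-q}$, while for $q<1$ one must check that the growing correction terms stay of strictly lower order than $n^{2-q}$. Finally, to pin down the one explicit constant I would pass to the limit $q\to 1$, where the formula becomes $n\ln n-n+\tfrac12\ln n+c_1$; using $\zeta(0)=-\tfrac12$ and $\zeta'(0)=-\tfrac12\ln(2\pi)$ the apparent $1/(1-q)$ singularity cancels and gives $c_1=\ln\sqrt{2\pi}$, so that comparison with classical Stirling is automatic and, through $c_q=\tfrac{1}{2-q}-\delta_q$, yields $\delta_1=1-\ln\sqrt{2\pi}$.
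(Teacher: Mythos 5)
Your proof is correct, and it fills a real gap: the paper does not prove this theorem itself but defers entirely to \cite{Suyari06}. The argument there is in substance the one you give --- write $\ln_q(n!_q)=\sum_{k=1}^n\ln_q k$ using the additivity of $\ln_q$ under $\otimes_q$, then apply Euler--Maclaurin --- except that the reference applies Euler--Maclaurin to $f(x)=\ln_q x$ directly, so the lower endpoint of $\int_1^n\ln_q x\,dx$ produces the explicit term $\frac{1}{2-q}$ and all remaining constants are swept into the unspecified $\delta_q$; that is precisely how the statement's $c_q=\frac{1}{2-q}-\delta_q$ arises. Since Euler--Maclaurin is linear in $f$ and $\ln_q x$ is an affine function of $x^{1-q}$, your bookkeeping on the power sum $\sum_{k=1}^n k^{1-q}$ is equivalent, and what it buys is sharper information than the theorem asserts: the closed form $c_q=\frac{1}{2(1-q)}+\frac{\zeta(q-1)}{1-q}$, the value $\delta_2=\gamma-\frac{1}{2}$ from $\ln_2 n!_2=n-H_n$ (which is indeed the $q\to2$ limit of your $\delta_q$, the pole of $\frac{1}{2-q}$ cancelling against the pole of $\zeta(q-1)$), and the limit $c_1=\ln\sqrt{2\pi}$ via $\zeta(0)=-\frac{1}{2}$, $\zeta'(0)=-\frac{1}{2}\ln(2\pi)$, giving $\delta_1=1-\ln\sqrt{2\pi}$ exactly as stated. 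Two small remarks. First, the convergence caveat you raise for $q<1$ is resolved by the theorem's standing hypothesis $q\in\mathbb{R}^+$: the post-constant Euler--Maclaurin terms are $O(n^{-q})=o(1)$ for all $q>0$, whereas the condition you state (lower order than $n^{2-q}$) would be too weak to isolate the constant $c_q$; only for $q\le 0$ would genuinely growing corrections appear. Second, for $q>2$ the $q$-factorial itself eventually leaves the domain of the $q$-product (since $\sum_{k=1}^n k^{1-q}-(n-1)$ becomes negative), so as in the paper one must read $\ln_q(n!_q)=\sum_{k=1}^n\ln_q k$ as the defining convention, in the same spirit as (\ref{lnq-q-multinomial}); this affects your proof and the original one equally.
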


The proof is given in \cite{Suyari06}. Note that the terms $c_{q}$ and
$\frac{1}{2}$ do not depend on $n$, so that the precise $q$-Stirling's
formula (\ref{precise q-Stirling_cq}) recovers the rough $q$-Stirling's
formula (\ref{rough q-Stirling}) if $c_{q}$ and $\frac{1}{2}$ are ignored.

\begin{proposition}%
\begin{align}
\ln_{q}\left[
\begin{array}
[c]{l}%
n\\
k
\end{array}
\right]  _{q}  &  \simeq-c_{q}+\frac{1}{2}\left(  \ln_{q}n-\ln_{q}k-\ln
_{q}\left(  n-k\right)  \right) \nonumber\\
&  +\frac{n^{2-q}}{2-q}S_{2-q}\left(  \frac{k}{n},1-\frac{k}{n}\right)
\label{q-log_q-bino-coef}%
\end{align}

\end{proposition}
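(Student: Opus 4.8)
The plan is to reduce the $q$-binomial coefficient to the two-part ($k=2$) case of the $q$-multinomial coefficient and then substitute the precise $q$-Stirling formula term by term. Writing the two parts as $k$ and $n-k$, relation (\ref{lnq-q-multinomial}) gives
\[
\ln_q\left[\begin{array}{l} n \\ k \end{array}\right]_q = \ln_q(n!_q) - \ln_q(k!_q) - \ln_q((n-k)!_q),
\]
so the whole computation rests on inserting the $q\neq 2$ branch of (\ref{precise q-Stirling_cq}) for each of the three $q$-factorials.

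After this substitution I would collect terms according to their dependence on $c_q$, on $\tfrac12$, and on the linear-in-$m$ pieces. The three copies of $c_q$ add up to $c_q - c_q - c_q = -c_q$, reproducing the first term on the right-hand side. The three $\tfrac12\ln_q(\cdot)$ contributions assemble into $\tfrac12\bigl(\ln_q n - \ln_q k - \ln_q(n-k)\bigr)$, the middle term. The crucial cancellation is among the linear terms $-\tfrac{n}{2-q} + \tfrac{k}{2-q} + \tfrac{n-k}{2-q} = 0$, which vanish because $k+(n-k)=n$; this removes any spurious linear contribution and leaves only the entropy-like block.

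The remaining main term is $\tfrac{1}{2-q}\bigl(n\ln_q n - k\ln_q k - (n-k)\ln_q(n-k)\bigr)$, and the last step is to recognize it as $\tfrac{n^{2-q}}{2-q}S_{2-q}(k/n,1-k/n)$. Here the key algebraic identity is $m\ln_q m = (m^{2-q}-m)/(1-q)$, read off directly from the definition (\ref{q-logarithm}). Applying it with $m=n,k,n-k$ turns the block into $\tfrac{1}{(2-q)(1-q)}\bigl(n^{2-q}-k^{2-q}-(n-k)^{2-q}\bigr)$, where the linear residues $-n+k+(n-k)$ cancel once more. Factoring $n^{2-q}$ out of the bracket and comparing with the definition (\ref{Tsallis entropy}) of $S_{2-q}$ evaluated at the arguments $k/n$ and $1-k/n$ completes the identification. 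I expect no serious obstacle here: the argument is essentially bookkeeping of cancellations, and the only point demanding care is keeping the $1/(1-q)$ and $1/(2-q)$ factors straight when converting between the $q$-logarithm form and the power form of Tsallis entropy.
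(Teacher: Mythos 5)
Your proposal is correct and follows exactly the route the paper intends: the paper dispatches this Proposition with the one-line remark that it follows ``by the straightforward computation using the precise $q$-Stirling's formula,'' and your substitution of (\ref{precise q-Stirling_cq}) into (\ref{lnq-q-multinomial}), with the cancellation of the linear terms and the identity $m\ln_q m=(m^{2-q}-m)/(1-q)$ to recover $\frac{n^{2-q}}{2-q}S_{2-q}(k/n,1-k/n)$, is precisely that computation carried out in full.
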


This is easily proved by the straightforward computation using the precise
$q$-Stirling's formula (\ref{precise q-Stirling_cq}).

For easy understanding of the derivation of the generalized binomial
distribution, consider the case $q=1$ in (\ref{q-log_q-bino-coef}).%
\begin{align}
&  \ln\left[
\begin{array}
[c]{l}%
n\\
k
\end{array}
\right] \nonumber\\
&  \simeq-\ln\sqrt{2\pi}+\frac{1}{2}\ln\frac{n}{k\left(  n-k\right)  }%
+nS_{1}\left(  \frac{k}{n},1-\frac{k}{n}\right)
\label{stand-log_stand-bino-coef-1}\\
&  =\ln\frac{1}{\sqrt{2\pi}}\sqrt{\frac{n}{k\left(  n-k\right)  }}+\left(
-k\ln\frac{k}{n}-\left(  n-k\right)  \ln\left(  1-\frac{k}{n}\right)  \right)
\label{stand-log_stand-bino-coef}%
\end{align}
That is, we have%
\begin{equation}
\left[
\begin{array}
[c]{l}%
n\\
k
\end{array}
\right]  \left(  \frac{k}{n}\right)  ^{k}\left(  1-\frac{k}{n}\right)
^{n-k}\simeq\frac{1}{\sqrt{2\pi}}\sqrt{\frac{n}{k\left(  n-k\right)  }}.
\label{special form of the standard}%
\end{equation}
The left side of (\ref{special form of the standard}) is the special form of
the standard binomial distribution $\left[
\begin{array}
[c]{l}%
n\\
k
\end{array}
\right]  r^{k}\left(  1-r\right)  ^{n-k}$ in the sense of $r=\frac{k}{n}$. For
generalization of the standard binomial distribution, the most important
observation in this computation is that the term:%
\begin{equation}
nS_{1}\left(  \frac{k}{n},1-\frac{k}{n}\right)  =-k\ln\frac{k}{n}-\left(
n-k\right)  \ln\left(  1-\frac{k}{n}\right)  \label{last term q=1}%
\end{equation}
in (\ref{stand-log_stand-bino-coef-1}) and (\ref{stand-log_stand-bino-coef})
corresponds to $r^{k}\left(  1-r\right)  ^{n-k}$ in the standard binomial
distribution $\left[
\begin{array}
[c]{l}%
n\\
k
\end{array}
\right]  r^{k}\left(  1-r\right)  ^{n-k}$ by replacement $r=\frac{k}{n}$. More
precisely, (\ref{last term q=1}) coincides with%
\begin{equation}
-\ln r^{k}\left(  1-r\right)  ^{n-k}=-k\ln r-\left(  n-k\right)  \ln\left(
1-r\right)  \label{replacement r=k/n}%
\end{equation}
by the replacement. (Compare the right sides in (\ref{last term q=1}) and
(\ref{replacement r=k/n}).)

This correspondence is also applied to (\ref{q-log_q-bino-coef}). The last
term on the right hand of (\ref{q-log_q-bino-coef}) is computed as%
\begin{align}
&  \frac{n^{2-q}}{2-q}S_{2-q}\left(  \frac{k}{n},1-\frac{k}{n}\right)
\nonumber\\
&  =\frac{1}{2-q}\left(  -k^{2-q}\ln_{2-q}\frac{k}{n}-\left(  n-k\right)
^{2-q}\ln_{2-q}\left(  1-\frac{k}{n}\right)  \right)  .
\end{align}
Applying the replacement $r=\frac{k}{n}$ in this formula to
(\ref{q-log_q-bino-coef}) as similarly as the replacement $r=\frac{k}{n}$ in
the standard case (\ref{stand-log_stand-bino-coef}), the generalized binomial
distribution $b_{q}\left(  k;n,r\right)  $ is defined.

\begin{definition}
For given $n,k\left(  \leq n\right)  \in\mathbb{N}$ and $r\in\left(
0,1\right)  $, if the $q$-logarithm of the probability mass function
$b_{q}\left(  k;n,r\right)  $ is given by%
\begin{align}
&  \ln_{q}b_{q}\left(  k;n,r\right)  =\ln_{q}\left[
\begin{array}
[c]{l}%
n\\
k
\end{array}
\right]  _{q}\nonumber\\
&  \quad+\frac{1}{2-q}\left(  k^{2-q}\ln_{2-q}r+\left(  n-k\right)  ^{2-q}%
\ln_{2-q}\left(  1-r\right)  \right)  +C_{q}%
\label{def_lnq_q-binomial distribution}%
\end{align}
with%
\begin{equation}
\sum_{k=0}^{n}b_{q}\left(  k;n,r\right)  =1\text{,\quad}1+\left(  1-q\right)
C_{q}>0\text{\quad and\quad}C_{1}=0,
\end{equation}
$b_{q}\left(  k;n,r\right)  $ is the $q$-binomial distribution.
\end{definition}

The $q$-multinomial distribution $m_{q}\left(  n_{1},\cdots,n_{k}%
;n,r_{1},\cdots,r_{k}\right)  $ is easily defined in a similar way.%
\begin{align}
&  \ln_{q}m_{q}\left(  n_{1},\cdots,n_{k};n,r_{1},\cdots,r_{k}\right)
\nonumber\\
&  =\ln_{q}\left[
\begin{array}
[c]{ccc}
& n & \\
n_{1} & \cdots & n_{k}%
\end{array}
\right]  _{q}+\frac{1}{2-q}\sum_{i=1}^{k}n_{i}^{2-q}\ln_{2-q}r_{i}%
+C_{q}\label{def_lnq_q-multinomial distribution}%
\end{align}
where $n=\sum_{i=1}^{k}n_{i}$, $n_{i}\in\mathbb{N}\,\left(  i=1,\cdots
,k\right)  $, and $1=\sum_{i=1}^{k}r_{i}$.

The explicit form of the $q$-binomial distribution $b_{q}\left(  k;n,r\right)
$ can be written, but the form of $\ln_{q}b_{q}\left(  k;n,r\right)  $ in
(\ref{def_lnq_q-binomial distribution}) is much simpler and more useful for
applications. Note that the $q$-binomial distribution $b_{q}\left(
k;n,r\right)  $ includes the scaling effect $\left(  \exp_{q}\left(
C_{q}\right)  \right)  ^{1-q}=1+\left(  1-q\right)  C_{q}\left(  >0\right)  $
in itself. Of course, such a scaling effect disappears when $q\rightarrow1$.

\section{$\alpha$-divergence and large deviation estimate derived from the
generalized binomial distribution}

The $q$-divergence is explicitly derived from the definition
(\ref{def_lnq_q-binomial distribution}) of the $q$-binomial distribution
$b_{q}\left(  k;n,r\right)  $.

\begin{theorem}
For the $q$-binomial distribution $b_{q}\left(  k;n,r\right)  $ defined by
(\ref{def_lnq_q-binomial distribution}), we have%
\begin{equation}
\ln_{q}b_{q}\left(  k;n,r\right)  \simeq-\frac{n^{2-q}}{2-q}D_{2-q}\left(
p\left\Vert r\right.  \right)  +C_{q}\label{qGBin-qdiv}%
\end{equation}
for large $n\in\mathbb{N}$ where $D_{q}\left(  p\left\Vert r\right.  \right)
$ is the $q$-divergence defined by%
\begin{equation}
D_{q}\left(  p\left\Vert r\right.  \right)  :=\sum_{i=0}^{1}p_{i}\ln
_{2-q}\frac{p_{i}}{r_{i}}=\frac{1-\sum\limits_{i=0}^{1}p_{i}^{q}r_{i}^{1-q}%
}{1-q}%
\end{equation}
and%
\begin{align}
p &  :=\left(  p_{0},p_{1}\right)  =\left(  \frac{k}{n},1-\frac{k}{n}\right)
,\\
r &  :=\left(  r_{0},r_{1}\right)  =\left(  r,1-r\right)  .
\end{align}

\end{theorem}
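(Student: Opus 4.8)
The plan is to substitute the asymptotic expansion of the $q$-binomial coefficient, i.e. the right-hand side of (\ref{q-log_q-bino-coef}), directly into the defining relation (\ref{def_lnq_q-binomial distribution}) for $\ln_q b_q(k;n,r)$, and then to collapse the result onto the $q$-divergence by means of a single algebraic identity. After the substitution,
\begin{align*}
\ln_q b_q(k;n,r) &\simeq -c_q + \tfrac12\bigl(\ln_q n-\ln_q k-\ln_q(n-k)\bigr)\\
&\quad + \frac{n^{2-q}}{2-q}S_{2-q}\!\left(\tfrac kn,1-\tfrac kn\right)\\
&\quad + \frac{1}{2-q}\bigl(k^{2-q}\ln_{2-q}r+(n-k)^{2-q}\ln_{2-q}(1-r)\bigr) + C_q .
\end{align*}
First I would set $p_0=k/n$, $p_1=1-k/n$, $r_0=r$, $r_1=1-r$ and write $k^{2-q}=n^{2-q}p_0^{2-q}$, $(n-k)^{2-q}=n^{2-q}p_1^{2-q}$, so that the two contributions of order $n^{2-q}$ merge into
\[
\frac{n^{2-q}}{2-q}\Bigl[\,S_{2-q}(p_0,p_1)+\sum_{i=0}^{1}p_i^{2-q}\ln_{2-q}r_i\,\Bigr].
\]

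The crux is then the identity
\[
S_{2-q}(p_0,p_1)+\sum_{i=0}^{1}p_i^{2-q}\ln_{2-q}r_i=-D_{2-q}(p\Vert r).
\]
To verify it I would use $\ln_{2-q}x=(x^{q-1}-1)/(q-1)$, read off from (\ref{q-logarithm}), together with the explicit form $S_{2-q}(p_0,p_1)=(1-\sum_i p_i^{2-q})/(1-q)$ coming from (\ref{Tsallis entropy}); both the left-hand side above and $-D_{2-q}(p\Vert r)$ then reduce to the common expression $(\sum_i p_i^{2-q}r_i^{q-1}-1)/(q-1)$. This is the Tsallis analogue of the classical relation ``entropy plus expected log-likelihood equals minus the relative entropy,'' and it degenerates to that relation as $q\to1$. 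Substituting the identity turns the order-$n^{2-q}$ block into $-\frac{n^{2-q}}{2-q}D_{2-q}(p\Vert r)$.

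It remains to argue that the surviving terms $-c_q$ and $\tfrac12(\ln_q n-\ln_q k-\ln_q(n-k))$ are negligible at the scale $n^{2-q}$. With $k=np_0$, $n-k=np_1$ and (\ref{q-logarithm}), the bracket evaluates to $\frac{1}{1-q}\bigl[n^{1-q}(1-p_0^{1-q}-p_1^{1-q})+1\bigr]$, which is $O(n^{1-q})$ (and $O(\ln n)$ when $q=1$); since $1-q<2-q$ this is of strictly lower order than the divergence term, and together with the constant $-c_q$ it is absorbed into the asymptotic symbol $\simeq$, leaving $-\frac{n^{2-q}}{2-q}D_{2-q}(p\Vert r)+C_q$ as claimed. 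I expect the only real obstacle to be the index bookkeeping in the central identity: Tsallis entropy enters at parameter $2-q$ while the divergence is built from $\ln_{2-q}$, i.e. effectively at parameter $2-(2-q)=q$ inside the ordinary $q$-logarithm, so a single slip in the duality $q\leftrightarrow 2-q$ or in the sign of the denominator $q-1$ versus $1-q$ is the easiest way to land on the wrong rate function. Everything else is substitution and a routine comparison of powers of $n$.
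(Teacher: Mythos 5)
Your proposal is correct and is essentially the paper's own proof: the paper's one-line argument (``a straightforward computation using (\ref{2-q-correspondence})'') is exactly the substitution-plus-identity computation you carry out, with the identity $S_{2-q}(p_{0},p_{1})+\sum_{i=0}^{1}p_{i}^{2-q}\ln_{2-q}r_{i}=-D_{2-q}\left(p\left\Vert r\right.\right)$ doing all the work. The only (minor) difference is that you start from the more precise expansion (\ref{q-log_q-bino-coef}) and explicitly verify that the residual terms $-c_{q}$ and $\tfrac{1}{2}\left(\ln_{q}n-\ln_{q}k-\ln_{q}\left(n-k\right)\right)=O\left(n^{1-q}\right)$ are negligible at scale $n^{2-q}$, whereas the paper invokes the rough correspondence (\ref{2-q-correspondence}) in which those terms have already been discarded.
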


\begin{proof}
A straightforward computation using (\ref{2-q-correspondence}) yields the right side of
(\ref{qGBin-qdiv}).
\end{proof}

Of course, for the $q$-multinomial distribution $m_{q}\left(  n_{1}%
,\cdots,n_{k};n,r_{1},\cdots,r_{k}\right)  $ defined by
(\ref{def_lnq_q-multinomial distribution}), we easily obtain the similar
result,%
\begin{equation}
\ln_{q}m_{q}\left(  n_{1},\cdots,n_{k};n,r_{1},\cdots,r_{k}\right)
\simeq-\frac{n^{2-q}}{2-q}D_{2-q}\left(  p\left\Vert r\right.  \right)
+C_{q}\label{qMulti-qdiv}%
\end{equation}
for large $n\in\mathbb{N}$ where $D_{q}\left(  p\left\Vert r\right.  \right)
$ is the $q$-divergence defined by%
\begin{equation}
D_{q}\left(  p\left\Vert r\right.  \right)  :=\sum_{i=1}^{k}p_{i}\ln
_{2-q}\frac{p_{i}}{r_{i}}=\frac{1-\sum\limits_{i=1}^{k}p_{i}^{q}r_{i}^{1-q}%
}{1-q}%
\end{equation}
and $p:=\left(  p_{1},\cdots,p_{k}\right)  =\left(  \frac{n_{1}}{n}%
,\cdots,\frac{n_{k}}{n}\right)  $ and $r:=\left(  r_{1},\cdots,r_{k}\right)  $.

(\ref{qMulti-qdiv}) recovers the standard case in the limit $q\rightarrow1$:%
\begin{equation}
\ln m_{1}\left(  n_{1},\cdots,n_{k};n,r_{1},\cdots,r_{k}\right)  \simeq
-nD_{1}\left(  p\left\Vert r\right.  \right)
\end{equation}
where $D_{1}\left(  p\left\Vert r\right.  \right)  $ is Kullback-Leibler (KL)
divergence defined by%
\begin{equation}
D_{1}\left(  p\left\Vert r\right.  \right)  :=\sum_{i=1}^{k}p_{i}\ln
\frac{p_{i}}{r_{i}}.
\end{equation}

The $q$-divergence is known to have the simple relation with the $\alpha
$-divergence \cite{Oha07}.

\begin{proposition}
The $\alpha$-divergence $D^{\left(  \alpha\right)  }\left(  p\left\Vert
r\right.  \right)  $ defined by%
\begin{equation}
D^{\left(  \alpha\right)  }\left(  p\left\Vert r\right.  \right)  :=\left\{
\begin{array}
[c]{ll}%
\frac{4}{1-\alpha^{2}}\left(  1-\sum\limits_{i}p_{i}^{\frac{1-\alpha}{2}}%
r_{i}^{\frac{1+\alpha}{2}}\right)  & \quad\left(  \alpha\neq\pm1\right) \\
\sum\limits_{i}r_{i}\ln\frac{r_{i}}{p_{i}} & \quad\left(  \alpha=1\right) \\
\sum\limits_{i}p_{i}\ln\frac{p_{i}}{r_{i}} & \quad\left(  \alpha=-1\right)
\end{array}
\right.  \label{alpha-divergence}%
\end{equation}
has the following simple relation with the $q$-divergence.
\begin{equation}
D^{\left(  \alpha\right)  }\left(  p\left\Vert r\right.  \right)  =\frac{1}%
{q}D_{q}\left(  p\left\Vert r\right.  \right)  \quad\left(  q\neq0,1\right)
\label{alpha-q-divergence-relation}%
\end{equation}
where%
\begin{equation}
q=\frac{1-\alpha}{2}\quad\left(  \alpha\neq\pm1\right)  .
\label{q-alpha-parameter}%
\end{equation}

\end{proposition}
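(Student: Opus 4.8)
The plan is to prove this identity by direct substitution of the parameter relation (\ref{q-alpha-parameter}) into the definition (\ref{alpha-divergence}) of the $\alpha$-divergence, followed by elementary algebraic simplification until the closed form of the $q$-divergence emerges. Because both sides are explicitly given functions of $p$ and $r$, no analytic machinery is required: the whole argument reduces to rewriting the two exponents and the prefactor in terms of $q$ and matching the result against the second expression for $D_q$.

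First I would invert $q=\frac{1-\alpha}{2}$ to obtain $\alpha=1-2q$, which immediately converts the two exponents in (\ref{alpha-divergence}) into the exponents occurring in the closed form of $D_q$:
\[
\frac{1-\alpha}{2}=q,\qquad \frac{1+\alpha}{2}=1-q.
\]
Next I would simplify the prefactor by factoring $1-\alpha^{2}=(1-\alpha)(1+\alpha)$ and substituting $\alpha=1-2q$, which gives $1-\alpha^{2}=(2q)(2-2q)=4q(1-q)$ and hence $\frac{4}{1-\alpha^{2}}=\frac{1}{q(1-q)}$.

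Substituting these expressions into the branch $\alpha\neq\pm1$ of (\ref{alpha-divergence}) then yields
\[
D^{\left(\alpha\right)}\left(p\left\Vert r\right.\right)=\frac{1}{q(1-q)}\left(1-\sum_i p_i^{q}r_i^{1-q}\right).
\]
The final step is to recognize the parenthesized quantity as $(1-q)$ times the closed form of the $q$-divergence, since $D_q\left(p\left\Vert r\right.\right)=\bigl(1-\sum_i p_i^{q}r_i^{1-q}\bigr)/(1-q)$. Cancelling the common factor $(1-q)$ gives $D^{\left(\alpha\right)}\left(p\left\Vert r\right.\right)=\frac{1}{q}D_q\left(p\left\Vert r\right.\right)$, which is precisely (\ref{alpha-q-divergence-relation}) on the admissible range $q\neq0,1$.

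There is essentially no obstacle here: the proposition is a bookkeeping identity relating two parametrizations of the same divergence. The only points requiring minor care are the correct factorization $1-\alpha^{2}=4q(1-q)$ and the exclusion of the boundary values $q=0$ and $q=1$ (equivalently $\alpha=\pm1$), at which the prefactor degenerates; those cases correspond to the separately listed KL-type branches of (\ref{alpha-divergence}) and would be recovered, if one wished, by taking the appropriate limit rather than by this algebraic computation.
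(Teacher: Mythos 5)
Your proof is correct: inverting the parameter relation to $\alpha=1-2q$ turns the exponents into $q$ and $1-q$, the prefactor becomes $\frac{4}{1-\alpha^{2}}=\frac{1}{q(1-q)}$, and the identity $D^{(\alpha)}\left(p\left\Vert r\right.\right)=\frac{1}{q}D_{q}\left(p\left\Vert r\right.\right)$ follows at once from the closed form $D_{q}\left(p\left\Vert r\right.\right)=\bigl(1-\sum_{i}p_{i}^{q}r_{i}^{1-q}\bigr)/(1-q)$. The paper gives no proof of this proposition at all (it is stated as a known relation with a citation to Ohara), so your direct substitution is precisely the routine verification the authors leave to the reader, including the correct identification of the excluded values $q=0,1$ with $\alpha=\pm1$.
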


The simple transformation (\ref{q-alpha-parameter}) of the parameters $q$ and
$\alpha$ reveals that the $q$-divergence is mathematically equivalent to the
$\alpha$-divergence. This means that the $\alpha$-divergence is derived from
the $q$-binomial distribution.

\begin{corollary}
For the $q$-binomial distribution $b_{q}\left(  k;n,r\right)  $ defined by
(\ref{def_lnq_q-binomial distribution}), we have%
\begin{equation}
\ln_{q}b_{q}\left(  k;n,r\right)  \simeq-n^{\frac{3+\alpha}{2}}D^{\left(
-2-\alpha\right)  }\left(  p\left\Vert r\right.  \right)  +C_{\alpha}^{\prime}%
\end{equation}
where $D^{\left(  \alpha\right)  }\left(  p\left\Vert r\right.  \right)  $ is
the $\alpha$-divergence defined by (\ref{alpha-divergence}) and $\lim
_{\alpha\rightarrow-1}C_{\alpha}^{\prime}=0$.
\end{corollary}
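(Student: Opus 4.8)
The plan is to read the corollary directly off the $q$-divergence estimate of the preceding theorem by passing to the $\alpha$-parametrization; no genuinely new computation is required beyond the substitutions already supplied by (\ref{alpha-q-divergence-relation}) and the parameter identification (\ref{q-alpha-parameter}). I begin from
\[
\ln_q b_q(k;n,r) \simeq -\frac{n^{2-q}}{2-q}\, D_{2-q}(p\|r) + C_q,
\]
established in (\ref{qGBin-qdiv}), and rewrite its right-hand side under $q=\frac{1-\alpha}{2}$.

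First I would handle the prefactor: under this identification the exponent becomes $2-q=\frac{3+\alpha}{2}$, so $n^{2-q}=n^{(3+\alpha)/2}$, which already matches the power of $n$ claimed in the statement. Next I would convert the $q$-divergence $D_{2-q}(p\|r)$ into an $\alpha$-divergence. Writing (\ref{alpha-q-divergence-relation}) with renamed symbols as $D^{(\beta)}(p\|r)=\frac{1}{s}D_s(p\|r)$, $s=\frac{1-\beta}{2}$, the decisive point is to apply it at index $s=2-q$ rather than at the $q$ labelling the distribution. Solving $s=2-q=\frac{3+\alpha}{2}$ for $\beta$ gives $\beta=1-2s=-2-\alpha$, hence
\[
D_{2-q}(p\|r) = (2-q)\, D^{(-2-\alpha)}(p\|r).
\]

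Substituting this back, the factor $(2-q)$ cancels the $\frac{1}{2-q}$ in the prefactor, leaving precisely $-n^{(3+\alpha)/2}\,D^{(-2-\alpha)}(p\|r)+C_q$. I would then set $C_\alpha':=C_q$; since $\alpha\to-1$ corresponds to $q\to1$, the condition $C_1=0$ built into the definition of the $q$-binomial distribution yields $\lim_{\alpha\to-1}C_\alpha'=0$, which closes the argument. The only step demanding care --- and the nearest thing to an obstacle in an otherwise one-line substitution --- is the index bookkeeping: (\ref{alpha-q-divergence-relation}) must be fed the value $2-q$ (equivalently the shifted parameter $-2-\alpha$) in its divergence index, not the $q$ of the binomial distribution itself.
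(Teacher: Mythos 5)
Your proposal is correct and is exactly the argument the paper intends: substitute $q=\frac{1-\alpha}{2}$ into (\ref{qGBin-qdiv}) and apply the relation (\ref{alpha-q-divergence-relation}) at the divergence index $2-q=\frac{3+\alpha}{2}$ (giving the shifted parameter $-2-\alpha$), so the factor $2-q$ cancels and $C'_\alpha:=C_q\to C_1=0$ as $\alpha\to-1$. The index bookkeeping you flag is indeed the only nontrivial point, and you handle it correctly.
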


The $q$-divergence was introduced as the relative entropy in Tsallis
statistics \cite{Ts98}\cite{BPT98}, independently from the $\alpha$-divergence
and the family of the $f$-divergence. In fact, no references and comments on
these divergences were given in \cite{Ts98}\cite{BPT98}. On the other hand,
the $\alpha$-divergence has much longer history than the $q$-divergence
and was originally introduced in the evaluation of the classification errors
\cite{Cher52}. Later, the $\alpha$-divergence has been studied in the
information geometry for providing the geometrical structures of the manifold
of probability measures which is well consistent with the fundamentals in
statistics \cite{AN00}.

Finally, using (\ref{qGBin-qdiv}), the large deviation estimate will be
presented by means of the $q$-divergence (or the $\alpha$-divergence) in a
combinatorial way.

\begin{theorem}
Let $X_{i}$ $\left(  i=1,\cdots,n\right)  $ be a random variable taking values
in $\left\{  0,1\right\}  $ with probability
\begin{equation}
P\left(  X_{i}=0\right)  =r,\quad P\left(  X_{i}=1\right)  =1-r.
\end{equation}
If a sum of the random variables $\sum_{i=1}^{n}X_{i}$ follows the the
$q$-binomial distribution $b_{q}\left(  k;n,r\right)  $, for $0<x<r$ and
$0<q<2$ we have%
\begin{equation}
\frac{1}{n^{2-q}}\ln_{q}P\left(  \frac{1}{n}\sum_{i=1}^{n}X_{i}<x\right)
\simeq-\frac{1}{2-q}D_{2-q}\left(  x\left\Vert r\right.  \right)  .
\label{main result}%
\end{equation}

\end{theorem}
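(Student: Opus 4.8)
The plan is to run the classical combinatorial large-deviation argument, but with $\ln_{q}$ and the scaling $n^{2-q}$ replacing $\ln$ and $n$, and with the point-mass asymptotics supplied by the correspondence (\ref{qGBin-qdiv}) in place of Stirling's estimate of the ordinary binomial coefficient. First I would rewrite the tail probability as a finite sum of $q$-binomial point masses,
\[
P\left(\frac{1}{n}\sum_{i=1}^{n}X_{i}<x\right)=\sum_{0\le k<nx}b_{q}\left(k;n,r\right),
\]
so that the entire question reduces to estimating the size of this sum for large $n$.

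The second step is to isolate the dominant term. From (\ref{qGBin-qdiv}) we have $\ln_{q}b_{q}(k;n,r)\simeq-\frac{n^{2-q}}{2-q}D_{2-q}(p\|r)+C_{q}$ with $p=(k/n,1-k/n)$ and $r=(r,1-r)$, and since $D_{2-q}(p\|r)\ge 0$ vanishes exactly at $k/n=r$, the sequence $k\mapsto b_{q}(k;n,r)$ is unimodal with maximum at $k=nr$. Because $0<x<r$, the summation range $0\le k<nx$ lies strictly on the increasing side of the mode, so every term is bounded by the top one at $k=\lfloor nx\rfloor$, giving the sandwich
\[
b_{q}\left(\lfloor nx\rfloor;n,r\right)\le P\left(\frac{1}{n}\sum_{i=1}^{n}X_{i}<x\right)\le nx\cdot b_{q}\left(\lfloor nx\rfloor;n,r\right).
\]
Since $\ln_{q}$ is monotonically increasing, applying it preserves this sandwich; after dividing by $n^{2-q}$, the two extreme members should both be governed by (\ref{qGBin-qdiv}) evaluated at $p=(x,1-x)$, whose leading behaviour is $-\frac{1}{2-q}D_{2-q}(x\|r)$, while the additive constant contributes $C_{q}/n^{2-q}\to0$. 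This would yield (\ref{main result}).

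The hard part will be the last step, namely showing that the multiplicity prefactor $nx$ genuinely drops out after passing through $\ln_{q}$. Unlike the ordinary logarithm, $\ln_{q}$ is not additive on products: by the $q$-product law one has $\ln_{q}(ab)=\ln_{q}a+\ln_{q}b+(1-q)\ln_{q}a\,\ln_{q}b$, so the cross term couples the polynomial factor $\ln_{q}(nx)$ to the extensive quantity $\ln_{q}b_{q}\sim-\frac{n^{2-q}}{2-q}D_{2-q}$. I would therefore expand $\ln_{q}(nx\cdot b_{q})$ exactly and verify, separately in the two regimes $q<1$ and $q>1$, that its difference from $\ln_{q}b_{q}$ is $o(n^{2-q})$, so that both ends of the sandwich share the same normalized limit. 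Establishing this subleading control, and thereby justifying the symbol ``$\simeq$'' uniformly in $n$, is the genuine obstacle; the remainder is the standard dominant-term skeleton of the large-deviation estimate.
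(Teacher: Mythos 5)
Your skeleton---writing the tail probability as a sum of point masses, sandwiching it between the top term $b_{q}\left(\lfloor nx\rfloor;n,r\right)$ and (number of terms)$\times$(top term), applying $\ln_{q}$, and normalizing by $n^{2-q}$---is exactly the paper's proof. Moreover, for $0<q<1$ the step you defer does go through by the very mechanism you name: writing $\ln_{q}(ab)=\ln_{q}b+b^{1-q}\ln_{q}a$ with $a=\lfloor nx\rfloor+1$ and $b=b_{q}\left(\lfloor nx\rfloor;n,r\right)$, one has $b^{1-q}\leq1$ (since $0<b\leq1$ and $1-q>0$) and $\ln_{q}a\sim(nx)^{1-q}/(1-q)=o\!\left(n^{2-q}\right)$, so the correction is negligible; this is precisely the paper's case (ii).

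The genuine gap is that the claim you propose to verify---that the difference is $o\!\left(n^{2-q}\right)$ in \emph{both} regimes---is false for $1<q<2$, so your sandwich does not close there. In that regime $\ln_{q}a=\frac{1-a^{1-q}}{q-1}$ is \emph{bounded} (it increases to $\frac{1}{q-1}$), while $b^{1-q}=1+(1-q)\ln_{q}b\simeq1+(q-1)\frac{n^{2-q}}{2-q}D_{2-q}\left(x\Vert r\right)$ grows like $n^{2-q}$. Hence the cross term $(1-q)\ln_{q}a\,\ln_{q}b=\left(b^{1-q}-1\right)\ln_{q}a\simeq+\frac{n^{2-q}}{2-q}D_{2-q}\left(x\Vert r\right)$ is of the \emph{same} order as $\ln_{q}b$ and of opposite sign: it cancels the leading term rather than vanishing, and one finds $\frac{1}{n^{2-q}}\ln_{q}(ab)\to0$, not $-\frac{1}{2-q}D_{2-q}\left(x\Vert r\right)$. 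The conceptual reason is that for $q>1$ the $q$-exponential has power-law rather than exponential decay, so a polynomial multiplicity factor is \emph{not} absorbed by the $\ln_{q}/n^{2-q}$ normalization the way $\ln/n$ absorbs it in the i.i.d.\ case. The paper itself recognizes that $q>1$ needs a separate device: its case (iii) abandons the pure $\ln_{q}$ computation and instead bounds $\ln_{q}A_{n}<\ln A_{n}$, expanding $\ln\exp_{q}(\cdot)$ with the ordinary logarithm; your proposal contains no substitute for this step. (Even there the matter is delicate---the paper's final application of $\ln a<a-1$ occurs before division by $1-q<0$, which reverses the inequality---so the regime $1<q<2$ is the real mathematical content of this theorem, not a routine verification to be filled in later.)
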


\begin{proof}%
\begin{equation}
P\left(  \frac{1}{n}\sum_{i=1}^{n}X_{i}<x\right)  =P\left(  \sum_{i=1}%
^{n}X_{i}<nx\right)  =\sum\limits_{k=0}^{\left\lfloor nx\right\rfloor }%
b_{q}\left(  k;n,r\right)
\end{equation}
where $\left\lfloor a\right\rfloor :=\max\left\{  m\in\mathbb{Z}\left\vert
m\leq a\right.  \right\}  $.

First, consider the upper bound. Each term in the sum is bounded by $\left.
b_{q}\left(  k;n,r\right)  \right\vert _{k=\left\lfloor nx\right\rfloor
}=b_{q}\left(  \left\lfloor nx\right\rfloor ;n,r\right)  $ and so%
\begin{equation}
P\left(  \frac{1}{n}\sum_{i=1}^{n}X_{i}<x\right)  \leq\left(  \left\lfloor
nx\right\rfloor +1\right)  b_{q}\left(  \left\lfloor nx\right\rfloor
;n,r\right)
\end{equation}
Let $A_{n}$ be defined by the right side of this inequality%
\begin{equation}
A_{n}:=\left(  \left\lfloor nx\right\rfloor +1\right)  b_{q}\left(
\left\lfloor nx\right\rfloor ;n,r\right)  .
\end{equation}
(i) If $q=1$, that is, the random variables $X_{i}$ are i.i.d., $\frac{1}%
{n}\ln A_{n}$ can be computed as%
\begin{align}
\frac{1}{n}\ln A_{n}  &  =\frac{1}{n}\ln\left(  \left\lfloor nx\right\rfloor
+1\right)  +\frac{1}{n}\ln b_{1}\left(  \left\lfloor nx\right\rfloor
;n,r\right) \\
&  \simeq-D_{1}\left(  x\left\Vert r\right.  \right)  \quad\left(
n\gg0\right)  .
\end{align}
Thus, we obtain the standard upper bound:%
\begin{equation}
\frac{1}{n}\ln P\left(  \frac{1}{n}\sum_{i=1}^{n}X_{i}<x\right)  \leq
-D_{1}\left(  x\left\Vert r\right.  \right)  .
\end{equation}
(ii) If $0<q<1$, we have%
\begin{align}
&  \frac{1}{n^{2-q}}\ln_{q}A_{n}\\
&  =\frac{\ln_{q}b_{q}\left(  \left\lfloor nx\right\rfloor ;n,r\right)
+\left(  b_{q}\left(  \left\lfloor nx\right\rfloor ;n,r\right)  \right)
^{1-q}\ln_{q}\left(  \left\lfloor nx\right\rfloor +1\right)  }{n^{2-q}}\\
&  \simeq-\frac{1}{2-q}D_{2-q}\left(  x\left\Vert r\right.  \right)
\quad\left(  n\gg0\right) \\
&  \quad\left(  \because\frac{1}{n^{2-q}}\ln_{q}\left(  \left\lfloor
nx\right\rfloor +1\right)  \simeq0\right)
\end{align}
Thus, we obtain the upper bound:%
\begin{equation}
\frac{1}{n^{2-q}}\ln_{q}P\left(  \frac{1}{n}\sum_{i=1}^{n}X_{i}<x\right)
\leq-\frac{1}{2-q}D_{2-q}\left(  x\left\Vert r\right.  \right)  .
\end{equation}
(iii) If $1<q<2$, using $\ln_{q}a<\ln a$, we have%
\begin{align}
&  \frac{1}{n^{2-q}}\ln_{q}A_{n}<\frac{1}{n^{2-q}}\ln A_{n}\\
&  =\frac{1}{n^{2-q}}\left(  \ln\left(  \left\lfloor nx\right\rfloor
+1\right)  +\ln b_{q}\left(  \left\lfloor nx\right\rfloor ;n,r\right)  \right)
\\
&  \simeq\frac{1}{n^{2-q}}\ln\exp_{q}\left(  -\frac{n^{2-q}}{2-q}%
D_{2-q}\left(  x\left\Vert r\right.  \right)  +C_{q}\right)  \quad\left(
n\gg0\right) \\
&  =\frac{1}{n^{2-q}}\frac{\ln\left[  1+\left(  q-1\right)  \left(
\frac{n^{2-q}}{2-q}D_{2-q}\left(  x\left\Vert r\right.  \right)
+C_{q}\right)  \right]  }{1-q}\\
&  <-\frac{1}{2-q}D_{2-q}\left(  x\left\Vert r\right.  \right)  \quad\left(
\because\ln a<a-1\right)
\end{align}
Thus, we obtain the upper bound:%
\begin{equation}
\frac{1}{n^{2-q}}\ln_{q}P\left(  \frac{1}{n}\sum_{i=1}^{n}X_{i}<x\right)
\leq-\frac{1}{2-q}D_{2-q}\left(  x\left\Vert r\right.  \right)  .
\label{q-upper}%
\end{equation}

Next, consider the lower bound. If $x<r$, obviously%
\begin{equation}
P\left(  \frac{1}{n}\sum_{i=1}^{n}X_{i}<x\right)  \geq b_{q}\left(
\left\lfloor nx\right\rfloor ;n,r\right)  .
\end{equation}
Then, immediately we find%
\begin{align}
\frac{1}{n^{2-q}}\ln_{q}P\left(  \frac{1}{n}\sum_{i=1}^{n}X_{i}<x\right)   &
\geq\frac{1}{n^{2-q}}\ln_{q}b_{q}\left(  \left\lfloor nx\right\rfloor
;n,r\right)  \nonumber\\
&  \simeq-\frac{1}{2-q}D_{2-q}\left(  x\left\Vert r\right.  \right)
.\label{q-lower}%
\end{align}

The upper bound (\ref{q-upper}) and lower bound (\ref{q-lower}) lead to the
main result (\ref{main result}).
\end{proof}

\section{Conclusion}

Starting from the fundamental nonlinear differential equation $dy/dx=y^{q}$
only, we obtain the large deviation estimate in Tsallis statistics by means of
the $q$-divergence (or the $\alpha$-divergence) as the generalized rate
function.
The present approach provides us with important formulas for a power-law system
such as $q$-logarithm, $q$-exponential,
$q$-product, $q$-Gaussian distribution, $q$-Stirling's formula, $q$%
-multinomial coefficient, Tsallis entropy, and $q$-binomial distribution. We
use analytical derivations only, which recover the standard case when
$q\rightarrow1$. Therefore, our present results reveal that there exists a
fundamental and novel mathematical structure for power-law system recovering
the standard case (i.e., i.i.d. case in probability theory, Shannon information theory,
Boltzmann-Gibbs statistics in statistical physics) as a
special case. Of course, some important theorems in a power-law system are
still missing. However, the present strategy will definitely provide us
with fruitful applications in any related areas. Such
some applications will be presented in the near future.

\section*{Acknowledgment}
The first author (H.S.) is very grateful to Prof. Jan Naudts for the fruitful
discussions about the LDP during his sabbatical stay in Antwerp Univ.



%

\end{document}